\newcommand{\euler}{e}
\newcommand{\ramuno}{i}
\newcommand{\qudit}[1]{\left\vert #1 \right\rangle}
\newcommand{\rqudit}[1]{\left\langle #1 \right\vert}
\newcommand{\Z}{\mathbb{Z}}
\newcommand{\C}{\mathbb{C}}
\newcommand{\R}{\mathbb{R}}
\DeclareMathOperator{\mod}{mod}
\newtheorem{thm}{Theorem}
\newtheorem{theorem}[thm]{Theorem}
\newtheorem{definition}[thm]{Definition}
\begin{document}

\title{On the convex characterisation of the set of unital quantum channels}

\author{Constantino Rodriguez-Ramos and Colin M.\ Wilmott}

\address{Department of Mathematics, Nottingham Trent University, Clifton Campus Nottingham NG11 8NS, UK}
\ead{tinorodram@gmail.com}
\vspace{10pt}
\begin{indented}
\item[]March 2023
\end{indented}

\begin{abstract}
In this paper, we consider the convex structure of the set of unital quantum channels. To do this, we introduce a novel framework to construct and characterise different families of low-rank unital quantum maps. In this framework, unital quantum maps are represented as a set of complex parameters on which we impose a set of constraints. The different families of unital maps are obtained by mapping those parameters into the operator representation of a quantum map. For these families, we also introduce a scalar measuring their distance to the set of mixed-unitary maps. We consider the particular case of qutrit channels which is the smallest set of maps for which the existence of non-unitary extremal maps is known. In this setting, we show how our framework generalises the description of well-known maps such as the antisymmetric Werner-Holevo map but also novel families of qutrit maps.
\end{abstract}

%
%
%
%
%

\section{Introduction}

Quantum channels provide the most general characterisation of the arbitrary evolution of quantum systems and  represent a vital ingredient in establishing quantum computing and communication. For instance, in quantum key distribution protocols, the amount of overall noise in the quantum channel determines the rate at which secret bits are distributed between authorised parties. Mathematically, quantum channels are characterised as completely positive trace-preserving (CPT) linear mappings between density matrices. An interesting class of these maps are the unital completely positive trace-preserving (UCPT) maps which are those quantum channels sending the noisiest state of the system, the maximally-mixed state, to itself. There are several good reasons to consider unital quantum maps instead of general quantum channels. For low-dimensional systems, the additional constraint required by unitality often simplifies problems and allows for a geometrical intuition of the state space. Some crucial advances in quantum information theory like the parametrization of qubit channels are closely connected to the prior characterization of the structure of the set of unital maps \cite{Ruskai2002, Fujiwara1999}. 

A particular feature of qubit unital channels is that they always admit a decomposition in terms of convex combinations of unitary channels \cite{Kummerer1987}. This property allows us to associate the set of UCPT maps with the geometry of a 3-simplex in which the vertices correspond to the set of Pauli channels \cite{bengtsson2017geometry}. However, this property of qubit unital channels is no longer true for unital channels of higher dimension. Various investigations introduced examples of maps which were neither unitary channels nor could be decomposed in terms of convex combinations of unitary channels \cite{Tregub1986, Landau1993}. The existence of unital maps which are not mixed-unitary establishes a crucial difference between qubit channels and higher dimensional channels. From the point of view of convex decomposition, the description of the set of unital qubit maps is analogous to the description of doubly-stochastic matrices given by Birkhoff's theorem. This theorem establishes that doubly-stochastic matrices can be decomposed in terms of a convex combination of permutation matrices. Interestingly, the existence of non-unitary extremal maps within the set of UCPT maps implies that Birkhoff's theorem cannot be extended to other maps of higher dimension. 

Previous works investigated the convex structure of the set of unital maps and their relation to the set of mixed-unitary maps. For example, Audenaert et al. considered the distance between these two sets \cite{Audenaert2008}, and Mendl and Wolf provided computable criteria for the separation of the unital channels from the mixed-unitary set \cite{Mendl2008}. In this work, we develop tools for the study of the structure of the set of unital quantum maps. In particular, we provide a novel framework to construct families of maps with a fixed Kraus rank. We also provide a computable measure quantifying the relation of unital maps we introduce and the set of convex combinations of unitary maps. For the particular case of qutrit maps, we find that the families of maps we construct can generalise several well-known maps appearing in the literature such as the Weyl maps or the anti-symmetric Werner-Holevo map \cite{Werner2002}. 

This document is organised as follows. In section \ref{prelim}, we outline the necessary tools required for the study of UCPT maps and their convex structure. In section \ref{family}, we consider a parametrised family of quantum maps and derive the constraints that guarantee that such maps are unital and trace-preserving in terms of the defining parameters of the family. The same parametrisation was considered in the context of circuit decomposition \cite{Wang2015}. Our contribution is to use this parametrised family of maps to investigate the geometry of the set of UCPT maps. In addition, we provide a generalisation of this family of maps to consider UCPT maps with different ranks. We also construct a scalar determining the distance of the family of maps defined and the set of convex combinations of mixed-unitary maps. In section \ref{examples}, we consider the particular case of qutrit maps. We show that the framework introduced describes well-known UCTP-extremal maps such as the Heisenberg-Weyl maps and the antisymmetric Werner-Holevo channel \cite{Werner2002}.

\section{Preliminaries}\label{prelim}

Let $n \in \mathbb{N}$, the possible states of a $n$-dimensional quantum system are represented as vectors $\ket{\psi}\in\mathcal{H}_{n}$ where $\mathcal{H}_{n}$ represents a Hilbert space of dimension $n$ equipped with the particular inner product $\braket{\psi_A,\psi_B}\in\mathbb{C}$, which is anti-linear in the first argument and linear in the second. Linear operators acting over the system are represented as $n\times n$ complex matrices $O\in\mathcal{M}_{n}$. For example, we write $\mathbb{1}_n\in\mathcal{M}_{n}$ to denote the identity operator on $\mathcal{H}_n$. The space of linear operators $\mathcal{M}_{n}$ can be regarded as a Hilbert space when equipped with the Hilbert-Schmidt inner product defined as
\begin{equation}
\braket{O_A,O_B}_{\mathcal{HS}}=\Tr{O^{\dagger}_A O_A}.
\end{equation}
Open systems are described in terms of ensembles of quantum states $\{\ket{\psi_i}\}_{i\in\mathbb{Z}}$ for which each state $\ket{\psi_i}$ has an associated probability $p_i\in\mathbb{R}^{+}$ and $\sum_i{}p_i=1$. State ensembles can be described by a, so-called, density operator $\rho$ corresponding to a hermitian, positive positive semi-definite matrix with unit trace. We denote by $\mathcal{D}_n \subset \mathcal{M}_n$ to the space of all density matrices. 

Quantum operations represent the possible transformations that a quantum system may undergo. Quantum operations correspond to linear mappings between spaces of density matrices and consequently have certain constraints which we describe as follows. The mapping $\mathcal{E}: \mathcal{D}_n \mapsto \mathcal{D}_m$ is said to be positive if it sends positive semi-definite matrices to positive semi-definite matrices, and completely positive if ${\mathcal{E}} \otimes \mathbb{1}_n$ is positive for all $n$. A completely positive mapping $\mathcal{E}$ is trace-preserving if and only if $\Tr{\mathcal{E}(\rho)}=\Tr{\rho}$ for all $\rho \in \mathcal{D}_{n}$. Following the fact that a quantum operation is a mapping between density matrices, quantum operations correspond necessarily to complete positive trace-preserving (CPT) maps.
We denote the set of all CPT maps by $\Xi^{T}_{n,m}$. In this work, we consider the structure set of CPT maps leaving the maximally mixed state invariant $\mathcal{E}(\mathbb{1}_n/n) =\mathbb{1}_m/m$ which are called unital maps. We represent the set of all CP maps which are only unital by $\Xi^{U}_{n,m}$ and the set of all unital complete positive trace-preserving (UCPT) maps by $\Xi^{UT}_{n,m}:=\Xi^{T}_{n,m}\bigcap\Xi^{U}_{n,m}$. 

\subsection{Operator-sum representation of a map}

The map $\mathcal{E}: \mathcal{D}_n \mapsto \mathcal{D}_m$ is completely positive if and only if it admits a representation of the form  
\begin{eqnarray}
\mathcal{E}(\rho)=\sum_{i=1}^{r}K_i\rho K_i^{\dagger}
\label{OSR}
\end{eqnarray} 
for all $\rho \in \mathcal{D}_n$, where the matrices $K_i \in \mathcal{M}_{m\times n}$  are referred to as the Kraus operators \autocite{Choi1975}. This form of expressing a map is known as the operator-sum representation. A complete positive map $\mathcal{E}: \mathcal{D}_n \mapsto \mathcal{D}_m$ given by \eref{OSR} is trace-preserving if
\begin{eqnarray}
\sum_{i=1}^{r}K_i^{\dagger}K_i = \mathbb{1}_n
\label{TP}
\end{eqnarray}
and $\mathcal{E}$ is unital if
\begin{eqnarray}
\sum_{i=1}^{r}K_iK_i^{\dagger} = \mathbb{1}_m.
\label{UN}
\end{eqnarray} 

The operator-sum representation of a map is not unique. The following theorem establishes when two sets of operators represent the same map \autocite{nielsen2002quantum}.

\begin{theorem}
Suppose \(\{K_i\}_{i=0,\dots,n}\) and \(\{G_j\}_{j=0,\dots,n}\) are the sets of Kraus operators defining the CP maps $\mathcal{E}$ and $\mathcal{F}$, respectively. Then $\mathcal{E}=\mathcal{F}$ if and only if there exist complex numbers \(u_{ij}\) such that \(K_i=\sum_j{u_{ij}G_j}\) and \(U=(u_{ij})_{i,j\in\Z_n}\) is an \(n\) by \(n\) unitary matrix.
\label{UF}
\end{theorem}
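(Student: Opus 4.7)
The plan is to prove the two implications separately, after first observing that one can pad the smaller of the two Kraus sets with zero operators so that both families are indexed by the same set $\{0,\dots,n\}$; this is purely a bookkeeping step and does not affect the maps, since adding $0\cdot\rho\cdot 0^{\dagger}$ contributes nothing.

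For the easy (sufficiency) direction, I would compute directly. Assume $K_i=\sum_j u_{ij}G_j$ with $U=(u_{ij})$ unitary. Substituting into \eqref{OSR} gives
\begin{equation*}
\sum_i K_i\rho K_i^{\dagger}
=\sum_{i,j,k} u_{ij}\overline{u_{ik}}\, G_j\rho G_k^{\dagger}
=\sum_{j,k}\Bigl(\sum_i u_{ij}\overline{u_{ik}}\Bigr) G_j\rho G_k^{\dagger},
\end{equation*}
and the inner sum equals $(U^{\dagger}U)_{kj}=\delta_{jk}$, collapsing the double sum to $\sum_j G_j\rho G_j^{\dagger}$. Hence $\mathcal{E}=\mathcal{F}$.

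For the harder (necessity) direction, the cleanest route is via the Choi--Jamio{\l}kowski correspondence. Associate to any CP map its Choi operator $C_{\mathcal{E}}=\sum_{a,b}|a\rangle\langle b|\otimes \mathcal{E}(|a\rangle\langle b|)$, which is positive semi-definite and uniquely determined by the map. Using the vectorisation $|K\rangle\!\rangle:=\sum_a(K|a\rangle)\otimes|a\rangle$, a short calculation shows $C_{\mathcal{E}}=\sum_i |K_i\rangle\!\rangle\langle\!\langle K_i|$, and similarly $C_{\mathcal{F}}=\sum_j |G_j\rangle\!\rangle\langle\!\langle G_j|$. Since $\mathcal{E}=\mathcal{F}$ these Choi operators coincide, so $\{|K_i\rangle\!\rangle\}$ and $\{|G_j\rangle\!\rangle\}$ are two ensemble decompositions of one and the same positive semi-definite operator.

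The main obstacle, and the one non-trivial ingredient, is then the lemma that any two vector decompositions $A=\sum_i|v_i\rangle\langle v_i|=\sum_j|w_j\rangle\langle w_j|$ of a positive semi-definite operator are related by a unitary matrix, i.e.\ $|v_i\rangle=\sum_j u_{ij}|w_j\rangle$ with $U$ unitary. This is standard and is proved by diagonalising $A=\sum_\ell \lambda_\ell|e_\ell\rangle\langle e_\ell|$ and showing that each decomposition is of the form $|v_i\rangle=\sum_\ell V_{i\ell}\sqrt{\lambda_\ell}|e_\ell\rangle$ for some isometry $V$ (the zero eigenvalues account for the rows of zeros in the padded case); composing the isometries gives the required unitary. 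Applying this lemma to the vectorised Kraus operators and then un-vectorising yields $K_i=\sum_j u_{ij}G_j$, which is the desired relation.
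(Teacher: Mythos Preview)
Your argument is correct and is essentially the standard textbook proof (Nielsen--Chuang, Theorem~8.2, or equivalently the Schr\"odinger--HJW ensemble theorem applied to the Choi state). One cosmetic slip: with your conventions $|K\rangle\!\rangle=\sum_a K|a\rangle\otimes|a\rangle$ one actually gets $\sum_i|K_i\rangle\!\rangle\langle\!\langle K_i|=\sum_{a,b}\mathcal{E}(|a\rangle\langle b|)\otimes|a\rangle\langle b|$, which is the tensor-swap of the $C_{\mathcal{E}}$ you defined; this does not affect the argument, since the swapped operator is equally well determined by $\mathcal{E}$ and positive semi-definite.

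As for comparison with the paper: there is nothing to compare. Theorem~\ref{UF} is stated in the Preliminaries section as a quoted result from \autocite{nielsen2002quantum} and is not proved in the paper at all; it is used only as background to discuss the non-uniqueness of Kraus representations and to count parameters. So your write-up supplies a proof where the paper deliberately omits one, and your route via the Choi operator and the ensemble-decomposition lemma is exactly the one the cited reference takes.
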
 Two sets of Kraus operators with different cardinality represent the same map if by appending zero operators to the set with fewer elements, the unitary freedom condition is satisfied. Consider the map given in terms of the operators $\{K_i\}_{i\in\Z_n}$. Choi showed that such representation is minimal if and only if the operators $K_i$ are linearly independent \autocite{Choi1975}. We may define the Kraus rank as the cardinality of the minimal representation of the map.

\subsection{Choi representation of a map}

Consider now the representation of the map $\mathcal{E}: \mathcal{D}_d \mapsto \mathcal{D}_d$ in terms of the Choi operator $C_{\mathcal{E}}$ given by
\begin{equation}
	C_{\mathcal{E}}=(\Phi \otimes \mathbb{1}_d)\ket{\psi} \bra{\psi}
\end{equation}
where \(\ket{\psi}\) is a maximally entangled pure state i.e. $\ket{\psi}=\sum_{m=0}^{d-1}\ket{m}\ket{m}$. The operator-sum representation of $\mathcal{E}$ can always be recovered from the Choi state. In particular, the set of $r$ eigenvectors of $C_{\mathcal{E}}$
(in matrix form) multiplied with their respective eigenvalues is a valid Kraus set $\{K_i\}_{i\in\Z_r}$ defining $\mathcal{E}$ as in \ref{OSR} \cite{bengtsson2017geometry}. In this document, we will use the two introduced map representations depending on which is more useful in the particular problem.


\subsection{Convex characterization}

The set of trace-preserving maps, $\Xi^{T}_{n,m}$, and its adjoint, the set of unital maps, $\Xi^{U}_{n,m}$, are convex. So all possible maps of the form
\begin{equation}
\mathcal{E}_{AB}(p)=\mathcal{E}_A + (1-p)\mathcal{E}_B.
\end{equation}
where $0<p<1$ and $\mathcal{E}_A, \mathcal{E}_B \in \Xi^{T}_{n,m}(\Xi^{U}_{n,m})$ and  are trace preserving (unital). The elements of a set which do not admit such decomposition are called extreme points of the set. The concise characterisation of extreme points of $\Xi^{U}_{n,m}$ was provided by Choi in the following theorem \autocite{Choi1975}.
\begin{theorem}
Consider the set of UCP maps $\mathcal{E}: \mathcal{D}_n \mapsto \mathcal{D}_m$ with minimal operator-sum representation $\mathcal{E}(\rho)=\sum_{i=1}^{r}K_i\rho K_i^{\dagger}$. Then, $\mathcal{E}$ is an extreme point of $\Xi^{U}_{n,m}$ if and only if the set $\{K_i K^\dagger_j\}_{i,j \in \Z_r}$ is linearly independent.
\label{THUN}
\end{theorem}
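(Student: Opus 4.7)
My plan is to translate the extreme-point question for $\mathcal{E}$ into a linear algebra condition on Hermitian matrices $H = (h_{ij}) \in \mathcal{M}_r$, by observing that CP maps built from the minimal Kraus set of $\mathcal{E}$ correspond to positive semidefinite matrices $F = (f_{ij})$ via $\rho \mapsto \sum_{ij} f_{ij}\, K_i \rho K_j^{\dagger}$, and that unitality of such a map reduces to the single relation $\sum_{ij} f_{ij}\, K_i K_j^{\dagger} = \mathbb{1}_m$. The two directions of the theorem then amount to constructing, respectively, a nonzero Hermitian $H$ satisfying $\sum_{ij} h_{ij} K_i K_j^{\dagger} = 0$ from a nontrivial convex decomposition, and conversely using such an $H$ to exhibit one.

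For sufficiency (linear dependence implies non-extremality), I would start from a nonzero Hermitian $H$ with $\sum_{ij} h_{ij} K_i K_j^{\dagger} = 0$; if the hypothesised dependence has only complex coefficients, passing to real or imaginary combinations yields a nonzero Hermitian relation. For $\epsilon > 0$ small enough, $I_r \pm \epsilon H$ is positive definite, so diagonalising this $r\times r$ matrix exhibits
\[
\mathcal{E}_{\pm}(\rho) \;=\; \sum_{i,j \in \Z_r} (\delta_{ij} \pm \epsilon h_{ij})\, K_i \rho K_j^{\dagger}
\]
as CP maps. The defining property of $H$ together with \eqref{UN} makes each $\mathcal{E}_{\pm}$ unital, and the minimality of $\{K_i\}$ guarantees that the superoperator $\rho \mapsto \sum_{ij} h_{ij} K_i \rho K_j^{\dagger}$ is nonzero whenever $H$ is nonzero, so $\mathcal{E}_{+} \neq \mathcal{E}_{-}$ and $\mathcal{E} = \tfrac12(\mathcal{E}_{+} + \mathcal{E}_{-})$ is a genuine convex decomposition in $\Xi^{U}_{n,m}$.

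For necessity, write $\mathcal{E} = \lambda \mathcal{E}_1 + (1-\lambda)\mathcal{E}_2$ with $\mathcal{E}_1 \neq \mathcal{E}_2$ in $\Xi^{U}_{n,m}$. The pivotal step is to show that each $\mathcal{E}_{\alpha}$ admits a representation $\mathcal{E}_{\alpha}(\rho) = \sum_{ij} g^{(\alpha)}_{ij}\, K_i \rho K_j^{\dagger}$ with $G^{(\alpha)}$ positive semidefinite. I would pass to the Choi matrix: minimality makes $J(\mathcal{E}) = \sum_i |K_i\rangle\!\rangle \langle\!\langle K_i|$ have rank exactly $r$, and any CP map dominated by $\mathcal{E}$ must have Choi matrix supported in the range spanned by the vectorisations $\{|K_i\rangle\!\rangle\}$. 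Uniqueness of the expansion then forces $\lambda G^{(1)} + (1-\lambda) G^{(2)} = I_r$, and setting $H := G^{(1)} - I_r$ gives a nonzero Hermitian matrix which, on subtracting the unitality identities of $\mathcal{E}_1$ and $\mathcal{E}$, satisfies $\sum_{ij} h_{ij} K_i K_j^{\dagger} = 0$.

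The main obstacle is the ``CP-dominated-by-$\mathcal{E}$ implies common-Kraus form'' step in necessity; this is the only place where minimality enters essentially. I would settle it through a direct Gram-matrix analysis: linear independence of $\{K_i\}$ makes the Gram matrix of $\{|K_i\rangle\!\rangle\}$ invertible, so any PSD matrix with $0 \preceq J' \preceq J(\mathcal{E})$ is uniquely represented as $J' = \sum_{ij} g_{ij}\, |K_i\rangle\!\rangle \langle\!\langle K_j|$ with $G = (g_{ij}) \succeq 0$, yielding the claimed common-Kraus decomposition and completing the argument.
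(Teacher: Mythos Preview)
The paper does not supply its own proof of this statement: Theorem~\ref{THUN} is quoted as a preliminary result from Choi \autocite{Choi1975} and is used without argument. There is therefore nothing in the paper to compare your proof against.

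That said, your proposal is correct and is essentially the classical argument. Both directions are handled properly: the perturbation $\mathcal{E}_{\pm}$ built from $I_r \pm \epsilon H$ gives a genuine convex decomposition once you have a nonzero Hermitian $H$ in the kernel of $H \mapsto \sum_{ij} h_{ij} K_i K_j^{\dagger}$, and your remark about passing from complex to Hermitian coefficients via real/imaginary parts of $c_{ij} + \overline{c_{ji}}$ is the right way to produce such an $H$ from an arbitrary linear dependence. For the converse, the Choi-matrix step is the crux and you have it right: $\lambda J(\mathcal{E}_1) \preceq J(\mathcal{E})$ forces the range of $J(\mathcal{E}_1)$ into $\mathrm{span}\{|K_i\rangle\!\rangle\}$, and linear independence of the $|K_i\rangle\!\rangle$ (equivalently, invertibility of their Gram matrix) then yields a unique Hermitian $G^{(1)}$ with $J(\mathcal{E}_1) = \sum_{ij} g^{(1)}_{ij} |K_i\rangle\!\rangle\langle\!\langle K_j|$. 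Subtracting the unitality identities gives the desired nontrivial relation. One small point worth making explicit in a write-up: $\mathcal{E}_1 \neq \mathcal{E}_2$ with $0<\lambda<1$ forces $\mathcal{E}_1 \neq \mathcal{E}$, which is what guarantees $H = G^{(1)} - I_r \neq 0$.
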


Choi's theorem has a natural extension provided that the set of CPT maps is the dual of the set of UCP maps with respect to the complex conjugation. The following theorem establishes when a CPT map is an extreme point of the set $\Xi^{T}_{n,m}$.

\begin{theorem}
Consider the set of CPT maps $\mathcal{E}: \mathcal{D}_n \mapsto \mathcal{D}_m$ with minimal operator-sum representation $\mathcal{E}(\rho)=\sum_{i=1}^{r}K_i\rho K_i^{\dagger}$. Then, $\mathcal{E}$  is an extreme point of $\Xi^{T}_{n,m}$ if and only if the set $\{K^\dagger_i K_j\}_{i,j \in \Z_r}$ is linearly independent.
\label{THTP}
\end{theorem}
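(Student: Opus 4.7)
The plan is to mirror the proof of Theorem \ref{THUN} (Choi's characterization of extremality in $\Xi^U_{n,m}$), swapping the role of $K_iK_j^\dagger$ for $K_i^\dagger K_j$ throughout; this reflects the duality between the unital condition $\sum_iK_iK_i^\dagger=\mathbb{1}_m$ and the trace-preserving condition $\sum_iK_i^\dagger K_i=\mathbb{1}_n$. The pivotal intermediate statement I would establish is that $\mathcal{E}$ is extreme in $\Xi^T_{n,m}$ if and only if the only Hermitian matrix $H=(h_{ij})\in\mathcal{M}_r$ satisfying $\sum_{i,j}h_{ij}K_j^\dagger K_i=0$ is $H=0$. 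The equivalence of this Hermitian condition with the stated complex linear independence of $\{K_i^\dagger K_j\}_{i,j\in\Z_r}$ follows by splitting any complex coefficient matrix $c$ into its Hermitian parts $(c+c^\dagger)/2$ and $i(c-c^\dagger)/2$, each of which must independently yield zero.

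For the forward direction (not extreme implies dependent), I would begin with a nontrivial convex decomposition $\mathcal{E}=p\mathcal{E}_1+(1-p)\mathcal{E}_2$ with $\mathcal{E}_1\neq\mathcal{E}_2$. Because $\mathcal{E}-p\mathcal{E}_1=(1-p)\mathcal{E}_2$ is CP, the map $p\mathcal{E}_1$ is CP-dominated by $\mathcal{E}$; a standard structural result (Arveson's radial decomposition, or a rank argument on the Choi matrix) provides Kraus operators of $\mathcal{E}_1$ of the form $A_k=\sum_iv_{ki}K_i$. Expanding gives $p\mathcal{E}_1(\rho)=\sum_{i,j}h_{ij}K_i\rho K_j^\dagger$ with positive semidefinite Hermitian Gram matrix $H$. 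Trace-preservation of $\mathcal{E}_1$ is then equivalent to $\sum_{i,j}h_{ij}K_j^\dagger K_i=p\mathbb{1}_n$, and subtracting $p\sum_iK_i^\dagger K_i=p\mathbb{1}_n$ yields
$$\sum_{i,j}(h_{ij}-p\delta_{ij})K_j^\dagger K_i=0$$
with Hermitian coefficient matrix $H-p\mathbb{1}_r$. This matrix must be nonzero, since $H=p\mathbb{1}_r$ would give $\mathcal{E}_1=\mathcal{E}$, forcing $\mathcal{E}_2=\mathcal{E}$ too and contradicting $\mathcal{E}_1\neq\mathcal{E}_2$.

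For the converse, given a nonzero Hermitian $H$ with $\sum_{i,j}h_{ij}K_j^\dagger K_i=0$, I would pick $\epsilon>0$ small enough that $P_\pm:=\mathbb{1}_r\pm\epsilon H$ is positive definite and set
$$\mathcal{E}_\pm(\rho)=\sum_{i,j}(P_\pm)_{ij}\,K_i\rho K_j^\dagger.$$
Complete positivity is immediate from the Choi matrix picture, because the Choi matrix of $\mathcal{E}_\pm$ is $\sum_{i,j}(P_\pm)_{ij}\mathrm{vec}(K_i)\mathrm{vec}(K_j)^\dagger$ and $P_\pm$ is PSD. Trace-preservation follows from the direct calculation
$$\sum_{i,j}(\delta_{ij}\pm\epsilon h_{ij})K_j^\dagger K_i=\mathbb{1}_n\pm\epsilon\cdot 0=\mathbb{1}_n.$$
The identity $\mathcal{E}=\tfrac12(\mathcal{E}_++\mathcal{E}_-)$ then exhibits a nontrivial convex decomposition in $\Xi^T_{n,m}$, since by Theorem \ref{UF} and the minimality of $\{K_i\}$ the distinct coefficient matrices $P_\pm$ force $\mathcal{E}_+\neq\mathcal{E}_-$.

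The main obstacle is the appeal to CP-dominance in the forward direction, namely the assertion that the Kraus operators of a CP map dominated by $\mathcal{E}$ lie in the linear span of the Kraus operators of $\mathcal{E}$. Once this standard fact is in hand, the rest of the argument is algebraic rearrangement transparently parallel to the proof of Theorem \ref{THUN}.
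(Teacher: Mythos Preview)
The paper does not actually prove Theorem~\ref{THTP}; it merely states it as the ``natural extension'' of Choi's Theorem~\ref{THUN} by duality between the trace-preserving and unital constraints, and then moves on. Your proposal supplies the standard proof of this well-known result, and it is correct. The argument via CP-dominance (forward direction) and the $\mathbb{1}_r\pm\epsilon H$ perturbation (converse) is exactly the textbook route; the reduction from complex to Hermitian coefficient matrices is handled properly, and the appeal to minimality of the Kraus set to conclude $\mathcal{E}_+\neq\mathcal{E}_-$ is the right justification. The only point worth a remark is that what you flag as the ``main obstacle''---that the Kraus operators of a CP map dominated by $\mathcal{E}$ lie in the span of $\{K_i\}$---is indeed standard (it follows immediately from comparing ranges of Choi matrices), so the proof is complete as written.
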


Theorem \ref{THUN} and theorem \ref{THTP} establish bounds to the Kraus rank of the extreme points of the set of unital maps and the set of trace-preserving maps, respectively. The Kraus rank of an extreme point of $\Xi^{U}_{n,m}$ is upper bounded by \(m\). This follows from the fact that at most \(m^2\) matrices $K_i K^\dagger_j\in\mathcal{M}_{m\times m}$ can be linearly independent For the CPT case, we have it that the Kraus rank of an extreme point of $\Xi^{T}_{n,m}$ is upper bounded by \(n\) as at most \(n^2\) matrices $K^\dagger_i K_j\in\mathcal{M}_{n\times n}$ can be linearly independent. The set $\Xi^{UT}_{d,d}$ is also convex and the following theorem originally stated in \autocite{Landau1993} characterises its extreme points.
\begin{theorem}
\label{LS}
Consider the set of UCPT maps \(\mathcal{E}: \mathcal{D}_{d} \rightarrow \mathcal{D}_{d}\) where $\mathcal{E}(\rho):=\sum_{i=1}^{r}K_i\rho K_i^{\dagger}$  and  $\sum_{i=1}^{r}K_iK_i^{\dagger} =\sum_{i=1}^{r}K_i^{\dagger}K_i = \mathbb{1}_d$. Then, $\mathcal{E}$ is an extreme point of $\Xi^{UT}_{n,m}$ if and only if the set of \(2d \times 2d\) matrices
\begin{equation}
\label{extEQN}
    \{K^{\dagger}_i K_j \oplus K_i K^{\dagger}_j\}_{i,j\in\Z_r}
\end{equation}
is linearly independent.
\end{theorem}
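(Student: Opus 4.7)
The plan is to follow the standard perturbation-based proof that Choi used in establishing Theorems \ref{THUN} and \ref{THTP}, now imposing both the trace-preserving and unital constraints simultaneously. I will handle the two implications separately.

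For the direction ``linear independence $\Rightarrow$ extreme'', I would assume, for contradiction, a decomposition $\mathcal{E}=\tfrac{1}{2}(\mathcal{E}_1+\mathcal{E}_2)$ with $\mathcal{E}_1,\mathcal{E}_2\in\Xi^{UT}_{d,d}$ and $\mathcal{E}_1\neq\mathcal{E}_2$. Since $\mathcal{E}_2=2\mathcal{E}-\mathcal{E}_1$ is completely positive, the Kraus operators of $\mathcal{E}_1$ lie in the linear span of $\{K_i\}$ by an application of the unitary freedom in the operator-sum representation (Theorem \ref{UF}). This lets me write $\mathcal{E}_1(\rho)=\sum_{i,j}M_{ij}K_i\rho K_j^{\dagger}$ for some positive semidefinite Hermitian matrix $M\in\mathbb{C}^{r\times r}$. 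Imposing trace preservation and unitality on both $\mathcal{E}$ and $\mathcal{E}_1$ and setting $H:=M-\mathbb{1}_r$, I obtain the two relations
$$\sum_{i,j}H_{ij}K_j^{\dagger}K_i=0,\qquad \sum_{i,j}H_{ij}K_iK_j^{\dagger}=0,$$
in a nonzero Hermitian $H$. Reindexing the first equation through $i\leftrightarrow j$ and using $H_{ji}=\overline{H_{ij}}$, both relations can be read as a single nontrivial linear combination of the matrices $\{K_i^{\dagger}K_j\oplus K_iK_j^{\dagger}\}_{i,j}$, contradicting the assumed linear independence. Hence $H=0$, $M=\mathbb{1}_r$ and $\mathcal{E}_1=\mathcal{E}$.

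For the reverse direction, I would start from a linear dependence: coefficients $(c_{ij})\neq 0$ with $\sum c_{ij}K_i^{\dagger}K_j=0$ and $\sum c_{ij}K_iK_j^{\dagger}=0$. Taking Hermitian conjugates of each relation shows that $C^{\dagger}$ satisfies the same two equations, so one of $(C+C^{\dagger})/2$ or $(C-C^{\dagger})/(2i)$ provides a nonzero Hermitian solution $H$. For $t>0$ sufficiently small the matrix $\mathbb{1}_r\pm tH$ is positive semidefinite, which makes the perturbed maps $\mathcal{E}_{\pm}(\rho):=\sum_{i,j}(\delta_{ij}\pm tH_{ij})K_i\rho K_j^{\dagger}$ completely positive. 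The two linear-dependence relations, after the same reindexing, furnish the trace-preservation and unitality of $\mathcal{E}_{\pm}$. Minimality (and hence linear independence) of $\{K_i\}$ guarantees $\mathcal{E}_{\pm}\neq\mathcal{E}$, and the identity $\mathcal{E}=\tfrac{1}{2}(\mathcal{E}_{+}+\mathcal{E}_{-})$ then witnesses that $\mathcal{E}$ is not extreme.

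The main obstacle I anticipate is the careful indexing bookkeeping in both directions: trace preservation of $\sum_{i,j}A_{ij}K_i\rho K_j^{\dagger}$ naturally produces the sum $\sum A_{ij}K_j^{\dagger}K_i$, whereas the theorem statement uses $K_i^{\dagger}K_j$ in the first block of \eqref{extEQN}. Reconciling these requires the invariance of linear independence under transposition of the coefficient matrix, together with the fact that the transpose of a Hermitian matrix is Hermitian ($H^{T}=\overline{H}$), to identify the two formulations. This invariance is precisely what makes Theorem \ref{LS} a simultaneous generalisation of Theorems \ref{THUN} and \ref{THTP} rather than an incomparable requirement.
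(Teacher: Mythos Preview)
The paper does not supply its own proof of Theorem~\ref{LS}; it is quoted from \autocite{Landau1993} as a known result, so there is no in-paper argument to compare against. Your perturbation approach is exactly the standard Landau--Streater proof: perturb the Choi matrix within its support by a Hermitian $H$, and translate the trace-preserving and unital constraints on the perturbation into the two simultaneous vanishing conditions. Both directions are set up correctly.

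On the indexing issue you flag at the end: you are right that trace preservation naturally yields $\sum_{ij}H_{ij}K_j^{\dagger}K_i=0$, so the pairing that drops out of the argument is $\{K_i^{\dagger}K_j\oplus K_jK_i^{\dagger}\}_{i,j}$ (equivalently, after relabelling, $\{K_j^{\dagger}K_i\oplus K_iK_j^{\dagger}\}_{i,j}$). With that pairing your reindexing $c_{ij}=H_{ji}=\overline{H_{ij}}$ closes both directions at once. The pairing printed in the paper, $\{K_i^{\dagger}K_j\oplus K_iK_j^{\dagger}\}$, differs by transposing only one of the two blocks, and your proposed fix---``invariance of linear independence under transposition of the coefficient matrix''---does not quite cover this: a linear dependence of direct sums needs the \emph{same} coefficient matrix in both blocks, and transposing it in only one block is not an innocent relabelling. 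This is a known wrinkle in how the Landau--Streater criterion is quoted across sources rather than a defect in your method; the argument you wrote is precisely the one that proves the formulation in the original reference.
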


From this theorem, it follows that the Kraus rank of an extreme point of $\Xi^{UT}_{d,d}$ is upper bounded by $\sqrt{2d^2}$ since at most $2d^2$ matrices $K^{\dagger}_i K_j \oplus K_i K^{\dagger}_j$ can be linearly independent as their number of non-zero elements of these matrices is bounded by $2d^2$. For example, for dimension, $d=3$ the rank of the extreme points of unital and trace-preserving maps is upper bounded by $\sqrt{18}$.

\section{Parametrised UCPT maps}\label{family}

Consider the family of CP maps over dimension $d$, $\mathcal{E}: \mathcal{D}_d \mapsto \mathcal{D}_d$ acting on a density matrix $\rho$ as
\begin{eqnarray}
\mathcal{E}(\rho)=\sum_{i=0}^{d-1}K_i\rho K_i^{\dagger}\quad \textrm{ with }\nonumber\\
K_i = \sum_{j}^{d-1}{}\alpha_{ij}X_iZ_j,\quad \alpha_{ij}\in \C
\label{dfam1}
\end{eqnarray}
where $\{X_i\}_{i\in\Z_d}$ and $\{Z_i\}_{i\in\Z_d}$ are the shift and clock matrices, respectively. The shift and clock matrices are expressed in terms of Dirac notation as $X_i=\sum_{k=0}^{d-1}{}\ket{k+i}\bra{k}$ and $Z_i=\sum_{k=0}^{d-1}{}\omega^{ik}\ket{k}\bra{k}$ with $\omega=\euler^{\frac{2\pi}{d}\ramuno}$. 

The properties of the maps given by \eref{dfam1} derive from the properties of the shift and clock matrices. Following the orthogonality relation between two different shift matrices $\braket{X_i,X_j}_{\mathcal{HS}}=\delta_{ij}$, we have that the Kraus operators $\{K_i\}_{i\in\Z_d}$ in \eref{dfam1} are also orthogonal between them. Consequently, by Theorem \ref{UF}, the representation of a map $\mathcal{E}$ is unique except for a global phase applied to each one of the Kraus operators. Shift matrices $X_i$ determine the positions of the non-zero elements in the Kraus representation while clock matrices $Z_i$ along with $\alpha_{ij}$ are used to span all possible non-zero values. This follows from the fact that combinations given by $\alpha_{i;0}Z_0+\dots+\alpha_{i;d-1}Z_{d-1}$ span all diagonal matrices. If we substitute the explicit expression of $X_i$ and $Z_i$ in the Kraus operators as given by \eref{dfam1}, we obtain that
\begin{equation}
K_i =\sum_{j,k=0}^{d-1}{}\alpha_{ij}\omega^{kj}\qudit{k+i}\rqudit{k}.
\label{dfam3}
\end{equation}
We can see that each $K_i$ is determined by $d$ complex coefficients $\alpha_{i0},\dots,\alpha_{id}$ corresponding to $2d$ real parameters. However, the multiplication of any operator with a global phase does not affect the representation of the map. We obtain a univocal representation for each $K_i$ if we restrict one of the coefficients $\alpha_{ij}$ to the real set. For example, if we set $\alpha_{i0}\in\R$, each Kraus operator has a unique representation in terms of $2d-1$ real parameters and the family maps given by \eref{dfam1} is described in terms of $2d^2-d$ real parameters. 

At this point, we may consider the properties of the different maps given by \eref{dfam1} in terms of $\alpha_{ij}$. The following theorem establishes the conditions required on $\mathcal{E}$ to be unital and trace-preserving.
\begin{theorem}  
The map $\mathcal{E}: \mathcal{D}_d \mapsto \mathcal{D}_d$ as given by \eref{dfam1} is trace-preserving if
\begin{equation}
\sum_{i,j=0}^{d-1}{\alpha_{ij}\alpha^*_{ij}}=1 \label{c1}\\
\end{equation}
and
\begin{equation}
\sum_{i,j=0}^{d-1}{\alpha_{ij+l}\alpha^*_{ij}}=0,\quad l=1,\dots,d-1.\label{c2}
\end{equation}
The map $\mathcal{E}$ is unital if in addition to condition (\ref{c1}), we have it that 
\begin{equation} 
\sum_{i,j=0}^{d-1}{\alpha_{ij+l}\alpha^*_{ij}\omega^{-il}}=0 \quad l=1,\dots,d-1.
\label{c3}
\end{equation}
\label{UTConditions}
\end{theorem}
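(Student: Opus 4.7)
The plan is to substitute the given form of $K_i$ into the trace-preserving condition $\sum_{i=0}^{d-1} K_i^\dagger K_i = \mathbb{1}_d$ and the unital condition $\sum_{i=0}^{d-1} K_i K_i^\dagger = \mathbb{1}_d$, carry out the products using the orthogonality of the computational basis, and finally convert the resulting scalar identities into the stated parameter conditions by a discrete Fourier argument over $\mathbb{Z}_d$.

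First, I would compute $K_i^\dagger = \sum_{j,k} \alpha_{ij}^* \omega^{-jk}\ket{k}\bra{k+i}$ and multiply on the right by $K_i$. The factor $\bra{k+i}\,\ket{k'+i}=\delta_{k,k'}$ collapses one sum, leaving
\[
K_i^\dagger K_i=\sum_{j,j',k}\alpha_{ij}^*\alpha_{ij'}\omega^{(j'-j)k}\ket{k}\bra{k}.
\]
Summing over $i$ and changing the summation index $j'=j+l$, I would obtain $\sum_i K_i^\dagger K_i=\sum_k\bigl(\sum_l C_l\,\omega^{lk}\bigr)\ket{k}\bra{k}$ where $C_l=\sum_{i,j}\alpha_{ij}^*\alpha_{i,j+l}$. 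The trace-preserving condition $\sum_i K_i^\dagger K_i=\mathbb{1}_d=\sum_k\ket{k}\bra{k}$ is therefore equivalent to $\sum_{l=0}^{d-1} C_l\,\omega^{lk}=1$ for every $k\in\mathbb{Z}_d$. By inverting the $\mathbb{Z}_d$ Fourier transform (equivalently, using $\frac{1}{d}\sum_k\omega^{-lk}=\delta_{l,0}$) this is equivalent to $C_0=1$ and $C_l=0$ for $l=1,\dots,d-1$, which are precisely conditions \eqref{c1} and \eqref{c2}.

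For the unital case I would run essentially the same calculation but reversed, obtaining
\[
K_i K_i^\dagger=\sum_{j,j',k}\alpha_{ij}\alpha_{ij'}^*\omega^{(j-j')k}\ket{k+i}\bra{k+i}.
\]
Relabelling $m=k+i$ produces an extra phase $\omega^{-(j-j')i}$, and after summing over $i$ and setting $l=j-j'$ I would arrive at $\sum_i K_i K_i^\dagger=\sum_m\bigl(\sum_l D_l\,\omega^{lm}\bigr)\ket{m}\bra{m}$ with $D_l=\sum_{i,j}\alpha_{i,j+l}\alpha_{ij}^*\,\omega^{-li}$. Again by Fourier inversion, the unital condition reduces to $D_0=1$ (which is \eqref{c1} once more) and $D_l=0$ for $l=1,\dots,d-1$, giving \eqref{c3}.

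The calculation itself is routine, so the only mild subtlety is the $\omega^{-li}$ factor that distinguishes the unital from the trace-preserving condition: it arises because the translation $\ket{k}\mapsto\ket{k+i}$ moves the diagonal projector to a shifted position and one must pull this shift through the phase $\omega^{(j-j')k}$. Keeping careful track of this shift is where I expect the bookkeeping to be most error-prone, but it is the natural place for the difference between \eqref{c2} and \eqref{c3} to emerge. Finally, since each step in the reduction is an equivalence (the Fourier transform is invertible), the stated conditions are not only sufficient but also necessary, so the theorem's ``if'' can be strengthened to ``if and only if''.
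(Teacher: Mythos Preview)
Your proposal is correct and follows essentially the same route as the paper: compute $K_i^\dagger K_i$ and $K_iK_i^\dagger$ in the computational basis, collapse via orthogonality to diagonal operators, reindex by $l=j'-j$, and invert the $\mathbb{Z}_d$ Fourier transform to extract the conditions on the coefficients. The paper packages the intermediate sums as $\beta_{il}=\sum_j\alpha_{i,j+l}\alpha_{ij}^*$ rather than your $C_l,D_l$, and phrases the Fourier step as ``the unique solution to this system,'' but the substance is identical; your remark that the argument in fact yields an ``if and only if'' is also correct.
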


\begin{proof} Let us consider  the set  $\{K_i^\dagger K_i\}_{i \in \Z_d }$ in the $\{\qudit{a}\rqudit{b}, \ a,b \in \Z_{d}\}$ basis as 
\begin{eqnarray}
K_i^\dagger K_i &= \left(\sum_{k,j=0}^{d-1}{}\alpha^*_{ij}\omega^{-kj}\qudit{k}\rqudit{k+i}\right)\left(
\sum_{m,n=0}^{d-1}{}\alpha_{im}\omega^{mn}\qudit{n+i}\rqudit{n}\right)\nonumber\\
  &=\sum_{k,j,m,n=0}^{d-1}\alpha_{im}\alpha^*_{ij}\omega^{mn-kj}\ket{k}\braket{k+i|n+i}\bra{n}\nonumber\\
  &=\sum_{k,j,m=0}^{d-1}\alpha_{im}\alpha^*_{ij}\omega^{k(m-j)}\ket{k}\bra{k}.
   \label{KTK}
\end{eqnarray}
To satisfy the trace-preserving condition \(\sum_{i=0}^{d-1}K^{\dagger}_i K_i=\mathbb{1}_d\), 
it necessarily follows that
\begin{equation}
\sum_{i,j,m=0}^{d-1}\alpha_{im}\alpha^*_{ij}\omega^{k(m-j)}=1 \textrm{  for  } k=0,\dots,d-1.
\label{close1}
\end{equation}
\noindent
By the change of index, $m-j=l$, \eref{close1} can be expressed as
\begin{equation}
\sum_{i,l=0}^{d-1}{}\sum_{j=0}^{d-1}{}\alpha_{ij+l}\alpha^*_{ij}\omega^{kl}=1 \textrm{  for  } k=0,\dots,d-1
\end{equation}
\noindent
and using the change of variable $\beta_{il}=\sum_{j=0}^{d-1}{}\alpha_{ij+l}\alpha^*_{ij}$, we get that

\begin{equation}
\sum_{i,l=0}^{d-1}{}\beta_{il}\omega^{kl}=1 \textrm{  for  } k=0,\dots,d-1.
\end{equation}
The unique solution to this system of $d$ linearly independent equations in terms of the set of variables $\{\beta_{il}\}_{i,l\in\Z_d}$ corresponds to $\sum_{i=0}{}\beta_{i0}=1$ and $\sum_{i=0}{}\beta_{il}=0$ for $l=1,\dots d-1$. By expressing the solution of the system in terms of the original variables $\{\alpha_{ij}\}_{i,k\in\Z_d}$ we get precisely the equations (\ref{c1}) and (\ref{c2}). 

Similarly, we can obtain the conditions required by a map to be unital. Let us consider the set  $\{K_i K_i^\dagger\}_{i \in \Z_d} $ as

\begin{eqnarray}
K_i K_i^\dagger &= \left(
\sum_{m,n=0}^{d-1}{}\alpha_{im}\omega^{mn}\qudit{n+i}\rqudit{n}\right)\left(\sum_{k,j=0}^{d-1}{}\alpha^*_{ij}\omega^{-kj}\qudit{k}\rqudit{k+i}\right)\nonumber\\
  &=\sum_{k,j,m,n=0}^{d-1}\alpha_{im}\alpha^*_{ij}\omega^{mn-kj}\ket{k+i}\braket{k|n}\bra{n+i}\nonumber\\
   &=\sum_{k,j,m,n=0}^{d-1}\alpha_{im}\alpha^*_{ij}\omega^{mn-kj}\delta_{k,n}\ket{k+i}\bra{n+i}\nonumber\\
   &=\sum_{k,j,m=0}^{d-1}\alpha_{im}\alpha^*_{ij}\omega^{(k-i)(m-j)}\ket{k}\bra{k}.
   \label{KKT}
\end{eqnarray}
To satisfy the unital condition \(\sum_{i=0}^{d-1}K_i K^{\dagger}_i=\mathbb{1}_d\), it follows that
\begin{equation}
\sum_{i,j,m=0}^{d-1} \alpha_{im}\alpha^*_{ij}\omega^{(k-i)(m-j)}=1 \textrm{  for  }k=0,\dots,d-1.
\label{close2}
\end{equation}
By the change of index, $m-j=l$, \eref{close2} can be written as
\begin{equation}
\sum_{i,l=0}^{d-1}{}\sum_{j=0}^{d-1}{}\alpha_{ij+l}\alpha^*_{ij}\omega^{(k-i)l}=1 \textrm{  for  } k=0,\dots,d-1
\end{equation}
\noindent
and using now the change of variable $\beta_{il}=\sum_{j=0}^{d-1}{}\alpha_{ij+l}\alpha^*_{ij}$ we get that

\begin{equation}
\sum_{i,l=0}^{d-1}{}\beta_{ik}\omega^{(k-i)l}=1 \textrm{  for  } k=0,\dots,d-1.
\label{syst2}
\end{equation}
We get again a system of $d$ equations in terms of $\{\beta_{ik}\}_{i,k\in\Z_d}$. The solution of this system is given by $\sum_{i,l=0}{}\beta_{i0}=1$ and $\sum_{i}{}\beta_{il}\omega^{-l}=0$ for $l=1,\dots d-1$. If we express the solution of the system in terms of the elements of the set $\{\alpha_{ij}\}_{i,j\in\Z_d}$ we get precisely the equations (\ref{c1}) and (\ref{c3}) which completes the proof.
\end{proof}

\subsection{A general framework for constructing UCTP maps}

Equations \eref{c1}, \eref{c2} and \eref{c3} provide the necessary conditions required by the maps given in \eref{dfam1} to be unital and trace-preserving. Therefore, we can associate each one of those maps with an element of the following set of complex matrices
\begin{eqnarray}
\mathcal{A}_d=(\alpha_{ij})_{i,j\in\Z_d}\in\C^{d\times d} \textrm{ with } \nonumber \\
\sum_{i,j=0}^{d-1}{\alpha_{ij}\alpha^*_{ij}}=1 \textrm{ and }
\left\{\begin{array}{c}
\sum_{i,j=0}^{d-1}{\alpha_{ij+l}\alpha^*_{ij}}=0 \\ 
\sum_{i,j=0}^{d-1}{\alpha_{ij+l}\alpha^*_{ij}\omega^{-il}}=0
\end{array}\right. \textrm{ for } l=1,...,d-1
\label{matparam}
\end{eqnarray}
where $\omega=\euler^{\frac{2\pi}{d}\ramuno}$. Conversely, we can always find a map $\pi_{d}$ that sends the complex matrices $(\alpha_{ij})_{i,j\in\Z_d}\in\mathcal{A}_d$ to UCPT maps as
\begin{equation}
\pi_{d}(\alpha_{ij})=\mathcal{E}
\end{equation}
where the map $\mathcal{E}$ is defined as in \eref{dfam1}. At this point, we wish to generalise $\pi_{d}$ to consider other UCPT maps. In particular, we are interested in finding families of maps with different Kraus ranks.

To construct such maps we establish first some definitions. Let $A=\{P_m\}_{m\in\Z}$ denote the set of ordered pairs $P_m=(a_m,b_m)\in \Z_d \times \Z_d$ such that $a_m\neq a_{m'}$ and $b_m\neq b_{m'}$ if $m\neq m'$. Now consider the map $\mathcal{P}(a,b)=a+d(b\mod d)$ which applied on $P$ sends each pair to $\mathcal{P}(P):\Z_d \times \Z_d \rightarrow \Z_{d\times d}$. Given all these elements, we can always find the following set 
\begin{equation}
S_r=\{A_n\}_{n\in\Z_{r}}
\label{setS}
\end{equation} 
such that $\bigcup_{n=0}^{r-1}\mathcal{P}(A_n)=\Z_{d\times d}$. In other words, $S_r$ is a covering set of $\Z_{d\times d}$. At this point, we can introduce a generalisation of $\pi_{d}$. To do that, we define a new map sending complex matrices $(\alpha_{ij})_{i,j\in\Z_d}\in\mathcal{A}_d$ as given by \eref{matparam} to UCTP maps $\mathcal{E}: \mathcal{D}_d \mapsto \mathcal{D}_d$ of rank $r$ which in this case can be different from $d$. We construct this new map by determining how the parameters $\alpha_{ij}$ are mapped into a Choi operator representing $\mathcal{E}$.

\begin{definition}
Let $r,d\in\mathbb{N}$ such that $r\geq d$ and let $S_{r}$ be a covering set of $\Z_{d\times d}$ as given by \eref{setS}. Then, we define the map $\pi_{r}(\alpha_{ij})$ with $\alpha_{ij}\in \mathcal{A}_d$ in terms of its Choi representation by the operator
\begin{equation}
C_\mathcal{E}=\sum_{n=0}^{r-1}\sum_{(k,h)\in A_n \atop (l,i)\in A_n}\left(\sum_{j=0}^{d-1}{\alpha_{hj}\omega^{jk}}\right)\left(\sum_{j=0}^{d-1}{\alpha^*_{ij}\omega^{-jl}}\right)\ket{\mathcal{P}(k,k+h)}\bra{\mathcal{P}(l,l+i)}.
\label{GenerChoi}
\end{equation}
\end{definition} This definition generalises the family of UCTP maps given in \eref{dfam1} as it includes other families of UCTP maps with different ranks. To recover the original family of maps, we just need to fix $r=d$ in \eref{GenerChoi} to recover the same UCTP maps. The following theorem quantifies the dimensionality of the families of UCTP maps given by $\pi_r(\mathcal{A}_d)$.

\begin{theorem}
The family of UCTP maps represented by $\pi_r(\mathcal{A}_d)$ has dimension $2d^2-2d-r+1$.
\end{theorem}
\begin{proof}
The dimension set of UCTP maps $\pi_k(\mathcal{A}_d)$ is determined by the real dimensions of the set $\mathcal{A}_d$. However, we need to consider also the arbitrariness in the choice of $\alpha_{i,j}$ which is induced by the freedom in the choice of Kraus operators. As we saw, in the case of orthogonal Kraus operators, this freedom corresponds to multiplying each Kraus operator by a complex phase. Consequently, for the map $\pi_r(\mathcal{A}_d)$ with rank $r$ such freedom is given by the action of the group $\bigotimes_{i\in\Z_r}U(1)$. As a consequence, the set of possible maps $\pi_r(\mathcal{A}_d)$ is isomorphic to 
$\mathcal{A}_d/ \bigotimes_{i\in\Z_r}U(1)$ and its dimension is given by
\begin{equation}
\dim{(\pi_r(\mathcal{A}_d))}=\dim(\mathcal{A}_d)-\dim(\bigotimes_{i\in\Z_r}U(1))
\end{equation}
Since $\mathcal{A}(d)$ is a $2d^2$ dimensional set with $2(d-1)+1$ real constraints and $\bigotimes_{i\in\Z_r}U(1)$ has $r$ dimensions. We conclude that
\begin{equation}
\dim{(\pi_r(\mathcal{A}_d))}=2d^2-(2(d-1)+1)-d=2d^2-2d+1-r
\end{equation}
\end{proof}

From this dimensional analysis, we conclude that not every UCTP qudit map with rank $r$ can be expressed as a map in $\pi_k(\mathcal{A}_d)$. This follows from the fact that a complete description of such maps would require $2d^2r-d^2-r^2$ real parameters \autocites{Iten2018, Friedland2016}. 

In the following section, we will further characterise the families of maps represented by $\pi_r(\mathcal{A}_d)$. In particular, we will find the algebraic expression in terms of $(\alpha_{ij})$ determining whether a given map $\pi_r(\alpha_{ij})$ is an extreme point of the set of UCPT maps. This property is particularly useful in the study of its convex structure.

\subsection{Quantifying the distance to the mixed-unitary set.}

The following theorem establishes whether a UCPT map as given in \eref{dfam1} corresponds to an extreme point of $\Xi^{UT}_{d,d}$.

\begin{theorem} 
An unital and trace-preserving map given by $\pi_d(\alpha_{ij})$ where $(\alpha_{ij})_{i,j\in\Z_d}\in{\mathcal{A}_d}$ as given by $\eref{matparam}$ corresponds to an extreme point of the set unital and trace-preserving maps iff the matrices $(M_l|N_l)$ are full-rank for $l=0,...,d-1$ where
\begin{eqnarray}
M_{l}=\left(\left(\sum_{j=0}^{d-1}\alpha_{i+l j}\omega^{j(k-l)}\right)\left(\sum_{j=0}^{d-1}\alpha^*_{ij}\omega^{-jk}\right)\right)_{i,k\in\Z_d}
\label{MatrixM}
\end{eqnarray}
and  
\begin{eqnarray}
N_l=\left(\left(\sum^{d-1}_{j=0}{\alpha_{i+l j}\omega^{(k-i)j}}\right)\left(\sum_{j=0}^{d-1}{\alpha^*_{i j}\omega^{-(k-i)j}}\right)\right)_{i,k\in\Z_{d}}.
\label{MatrixN}
\end{eqnarray}  
\label{EXTREME}
\end{theorem}

\begin{proof}
 
By theorem \ref{LS}, we have it that a map is an extreme point of $\Xi^{UT}_{d,d}$ if the set $\{K^{\dagger}_i K_j \oplus K_i K^{\dagger}_j\}_{i,j\in\Z_d}$ is linear independent. First, let us consider $K_i^\dagger K_{i+l}$ as
\begin{eqnarray}
K_i^\dagger K_{i+l} &= \left(\sum_{k,j=0}^{d-1}{}\alpha^*_{ij}\omega^{-kj}\qudit{k}\rqudit{k+i}\right)\left(
\sum_{j,n=0}^{d-1}{}\alpha_{i+l j}\omega^{jn}\qudit{n+i+l}\rqudit{n}\right)\nonumber\\
  &=\sum_{k,n=0}^{d-1}\left(\sum_{j=0}^{d-1}\alpha_{i+l j}\omega^{jn}\right)\left(\sum_{j=0}^{d-1}\alpha^*_{ij}\omega^{-kj}\right)\ket{k}\braket{k+i|n+i+l}\bra{n}\nonumber\\
  &=\sum_{k,n=0}^{d-1}\left(\sum_{j=0}^{d-1}\alpha_{i+l j}\omega^{j(n-l)}\right)\left(\sum_{j=0}^{d-1}\alpha^*_{ij}\omega^{-kj}\right)\ket{k}\braket{k+i|n+i}\bra{n-l}\nonumber\\
  &=\sum_{k}^{d-1}\left(\sum_{j}^{d-1}\alpha_{i+l j}\omega^{j(k-l)}\right)\left(\sum_{j}^{d-1}\alpha^*_{ij}\omega^{-j k}\right)\ket{k}\bra{k-l}.
   \label{KTK1}
\end{eqnarray}
The matrices in $\eref{KTK1}$ can be expressed in vector form assuming $\ket{k}\bra{k-l}\sim\ket{k,k-l}$ so that
\begin{eqnarray}
K_i^\dagger K_{i+l}\cong \sum_{k=0}^{d-1}{}\gamma_{ikl}\ket{k,k-l}.
\end{eqnarray}
We take the inner product of two arbitrary vectors as
\begin{eqnarray}
\braket{K_j^\dagger K_{j+n}|K_i^\dagger K_{i+l}}= \sum_{k m}^{d-1}\gamma_{ikl}\gamma^*_{jmn}\braket{m,m-n|k,k-l}.
\end{eqnarray}
We see that $\braket{K_j^\dagger K_{j+n}|K_i^\dagger K_{i+l}}=0$ if $l\neq n$ for $k,l,m,n\in\Z_{d}$. The non-zero coefficients of $\{K_i^\dagger K_{i+l}\}_{i\in\Z_d}$ can be expressed in matrix form as $M_l$ for $l=0,...,d-1$. Second, we consider $K_{i+l} K_i^\dagger$ as
\begin{eqnarray}
K_{i+l} K_{i}^\dagger &= \left(
\sum_{j,n=0}^{d-1}{}\alpha_{i+l j}\omega^{nj}\qudit{n+i+l}\rqudit{n}\right)\left(\sum_{j,k=0}^{d-1}{}\alpha^*_{ij}\omega^{-jk}\qudit{k}\rqudit{k+i}\right)\nonumber\\
  &=\sum_{n,k=0}^{d-1}\left(\sum_{j=0}^{d-1}\alpha_{i+l j}\omega^{jn}\right)\left(\sum_{j=0}^{d-1}\alpha^*_{ij}\omega^{-jk}\right)\ket{n+i+l}\braket{n|k}\bra{k+i}\nonumber\\
  &=\sum_{k=0}^{d-1}\left(\sum_{j=0}^{d-1}\alpha_{i+l j}\omega^{jk}\right)\left(\sum_{j=0}^{d-1}\alpha^*_{ij}\omega^{-jk}\right)\ket{k+i+l}\bra{k+i}\nonumber\\
  &=\sum_{k=0}^{d-1}\left(\sum_{j=0}^{d-1}\alpha_{i+l j}\omega^{(k-i)j}\right)\left(\sum_{j=0}^{d-1}\alpha^*_{ij}\omega^{-(k-i)j}\right)\ket{k+l}\bra{k}.
   \label{KTK2}
\end{eqnarray}
As we did before, we may vectorise these matrices by using $\ket{k+l}\bra{k}\sim\ket{k+l,k}$ so that
\begin{equation}
K_{i+l}K_i^\dagger\cong\sum_{k=0}^{d-1}{}\gamma_{ikl}\ket{k+l,k}.
\end{equation}
The inner product of two arbitrary vectors is expressed as
\begin{equation}
\braket{K_{j+n} K_j^\dagger|K_{i+l} K_i^\dagger}=\sum_{k m}^{d-1}{}\gamma_{ikl}\gamma^*_{jmn}\braket{m+n,m|k+l,k}
\end{equation}
and we note that $\braket{K_{j+n} K_j^\dagger|K_{i+l} K_i^\dagger}=0$ in the case that $n\neq l$. In this case, the non-zero coefficients of the sets $\{K_{i+l}K_i^\dagger\}_{i\in\Z_d}$ are given by the matrices $N_l$ with $l=0,...,d-1$. As we saw, two elements of $\{K^{\dagger}_{i+l} K_{i} \oplus K_{i}K_{i+l}^\dagger\}_{i,l\in\Z_d}$ with different $l$ are linear independent so we just require that the all the matrices $(M_{l}|N_{l})$ with $l=0,...,d-1$ to be full-rank.
\end{proof}

Despite that theorem \ref{EXTREME} is enunciated only for maps in $\pi_d(\mathcal{A}_d)$, a similar result can be established for the maps with different ranks. In this case, the matrices $(M_{l}|N_{l})$ are reformulated according to the map in $\pi_r(\mathcal{A}_d)$. In the following section, we provide examples of UCPT qutrit maps in which these matrices are explicitly evaluated for the case of rank three and rank four maps.

Following theorem \ref{EXTREME}, we may construct a measure quantifying the relation of a map in $\pi_r(\mathcal{A}_d)$ and the set of convex sums of unitary maps.

\begin{definition}
For every map given by $\zeta(\pi_r(\alpha_{ij}))$ with $\alpha_{ij}) \in \mathcal{A}_d$ define $\zeta(\pi_r(\alpha_{ij})\in \R^{+}$ as the following scalar
\begin{equation}
\zeta(\pi_r(\mathcal{A}_d)):=\frac{\min\{(\sigma(M_{l}|N_{l}))\}_{l\in\Z_r}}{\max\{(\sigma(M_{l}|N_{l}))\}_{l\in\Z_r}}
\end{equation}
where $\sigma(A)$ corresponds to the set of singular values o of $A$.
\end{definition}

By definition, the singular values of any matrix are positive so we necessarily have that $0\leq\zeta(\pi_r(\mathcal{A}_d))\leq 1$. By theorem \ref{EXTREME}, if the map $\pi_r(\alpha_{ij})$ is not UCTP-extremal, then at least one singular value is zero and we have that $\zeta(\pi_r(\alpha_{ij}))=0$. For that reason, the scalar $\zeta$ quantifies the distance between any map $\pi_r(\alpha_{ij})$ and the set of convex combinations of UCTP maps. In the following section we will consider different examples of UCTP qutrit maps and we will evaluate the scalar $\zeta$ for those maps. 

\section{Qutrit maps}\label{examples}

Theorem \ref{LS} establishes that $\sqrt{18}$ is an upper bound for the rank of  UCPT-extremal qutrit maps. We may then consider four different classes of UCPT-extremal qutrit maps based on their rank. All rank one UCTP maps correspond to unitary maps, $\mathcal{E}(\rho)=U \rho U$ with $U^{\dagger}U=\mathbb{1}$. For such maps, the UCPT-extremal conditions given by theorem \ref{LS} are trivially satisfied. It is the case that all rank two qutrit maps admit a decomposition in terms of other UCTP qutrit maps \autocite{Landau1993} and consequently we cannot find rank two UCPT-extremal qutrit maps. The rest of UCPT-extremal qutrit maps are either of rank three or four.

In this section, we consider different examples of UCPT-extremal qutrit maps through the parametrised families of maps derived in the previous section. In particular, 
we will construct different matrices in $\mathcal{A}_3$ and we will obtain their associated quantum maps through the mappings given by $\pi_3$ and $\pi_4$. For such maps, we will consider their relationship with respect to the set of UCPT-extremal maps by using the scalar $\zeta$.

\subsection{Examples}
Consider the matrix $E(a,b)\in\mathcal{A}_3$ defined as the $3\times 3$ matrix with $1$ in the $(a,b)$th entry and $0$s elsewhere. In total, we have nine possible quantum maps $\{\pi_3(E(a,b))\}_{a,b\in\Z_3}$ which transform $\rho$ as
\begin{equation}
\pi_3(E(a,b))(\rho)=X_aZ_b\rho (X_aZ_b)^{\dagger} \quad\textrm{for}\quad a,b\in\Z_3.
\label{firstex}
\end{equation}
The maps $\pi_3(E(a,b))$ are all unitary (rank one) and orthogonal between them. The set $\{\pi_3(E(a,b))\}_{a,b\in\Z_3}$ corresponds to the set of Heisenberg-Weyl channels of dimension three which is analogous to the set of Pauli channels for dimension two. For any of such maps, we have that $\zeta(\pi_3(E(a,b)))=1$ which is the maximum value of $\zeta$ that any map can achieve. 

Consider now the matrix $A\in\mathcal{A}_3$ which is given by
\begin{equation}
A=\frac{\sqrt{2}}{6}\left(\begin{array}{ccc}
0 & 1-\omega^2 & 1-\omega\\
0 & \omega^2-\omega & \omega-\omega^2\\
0 & \omega-1 & \omega^2-1
\end{array}\right).
\label{par3}
\end{equation}
The quantum map given by $\pi_3(A)$ acts on a density matrix $\rho$ as $\pi_3(A)(\rho)=\sum_{i=0}^{2}\frac{1}{\sqrt{2}}K_i\rho K_i^{\dagger}$ where
\begin{equation}
\fl \{K_0,K_1,K_2\}=\left\{
\left(\begin{array}{ccc}
1&0&0\\
0&-1&0\\
0&0&0
\end{array}\right),
\left(\begin{array}{ccc}
0&0&-1 \\
0&0&0  \\
0&1&0  \\
\end{array}\right),
\left(\begin{array}{ccc}
0&0&0  \\
0&0&1   \\
-1&0&0  \\
\end{array}\right)\right\}.
\end{equation}
This quantum map may result not be familiar at first glance, however, we can show that $\pi_3(A)$ is equivalent up to a unitary rotation to the well-known anti-symmetric Werner–Holevo (ASWH) map over dimension three. In particular, we have that the ASWH map can be expressed as $\mathcal{E}_{ASWH}(\rho)=U\pi_3(A)(\rho) U^{\dagger}$ with
\begin{equation}
U=\left(\begin{array}{ccc}
0&1&0  \\
1&0&0  \\
0&0&1  
\end{array}\right).
\end{equation}
One peculiarity of the ASWH map (and consequently the map $\pi_3(A)$) is that it maximises the distance with respect to the set of all convex sums of unitaries \cite{Mendl2008}. We may now evaluate the scalar $\zeta(\pi_3(A))$ in this case. For this map, we obtain that
\begin{equation}
\left(M_{0}\vert N_{0}\right)=
\frac{1}{2}\left(\begin{array}{ccc|ccc}
0 & 1 & 1 & 1 & 0 & 1\\
1 & 1 & 0 & 1 & 1 & 0\\
1 & 0 & 1 & 0 & 1 & 1
\end{array}\right)
\end{equation} and
\begin{equation}
\left(M_{1}\vert N_{1}\right)=
\frac{1}{2}\left(\begin{array}{ccc|ccc}
0 & -1 & 0 & -1 & 0 & 0\\
-1 & 0 & 0 & 0 & -1 & 0\\
0 & 0 & -1 & 0 & 0 & -1
\end{array}\right).
\end{equation} We get the singular values $\sigma\left(M_{0}\vert N_{0}\right)=\{\sqrt{2},\frac{1}{\sqrt{2}},\frac{1}{\sqrt{2}}\}$ and $\sigma\left(M_{1}\vert N_{1}\right)=\{\frac{1}{\sqrt{2}},\frac{1}{\sqrt{2}},\frac{1}{\sqrt{2}}\}$, respectively. So we conclude that 
\begin{equation}
\zeta(\pi_3(A))=\frac{1}{2}.
\end{equation}

Consider now the map given by $\pi_4(A)$. In terms of the operator-sum representation, this map can be expressed as $\pi_4(A)(\rho)=\sum_{i=0}^{4}\frac{1}{\sqrt{2}}K_i\rho K^{\dagger}_i$ where
\begin{equation}
\fl
\{K_0,K_1,K_2,K_3\}=\left\{
\left(\begin{array}{ccc}
1&0&0\\
0&-1&0\\
0&0&0
\end{array}\right),
\left(\begin{array}{ccc}
0&0&0 \\
0&0&0 \\
0&1&0 \\
\end{array}\right),
\left(\begin{array}{ccc}
0&0&-1 \\
0&0&0  \\
-1&0&0  \\
\end{array}\right),
\left(\begin{array}{ccc}
0&0&0  \\
0&0&1   \\
0&0&0  \\
\end{array}\right)\right\}.
\end{equation}
We wish to evaluate $\zeta(\pi_4(A))$. To do so we first need to evaluate the singular values of 
\begin{equation}
\left(M'_{0}\vert N'_{0}\right)=\frac{1}{2}
\left(\begin{array}{ccc|ccc}
1 & 1 & 0 & 1 & 1 & 0\\
0 & 1 & 0 & 0 & 0 & 1\\
1 & 0 & 1 & 1 & 0 & 1\\
0 & 0 & 1 & 0 & 1 & 0
\end{array}\right),
\end{equation}
\begin{equation}
\left(M'_{1}\vert N'_{1}\right)=\frac{1}{2}
\left(\begin{array}{cc|cc}
0 & 0 & 0 & -1 \\
-1 & 0 & 0 & 0 \\
0 & 0 & -1 & 0 \\
0 & -1 & 0 & 0 
\end{array}\right)
\end{equation}
\noindent
and
\begin{equation}
\left(M'_{2}\vert N'_{2}\right)=\frac{1}{2}
\left(\begin{array}{cc|cc}
-1 & 0 & 0 & -1 \\
0 & 0 & 0 & 0 \\
0 & -1 & -1 & 0 \\
0 & 0 & 0 & 0 
\end{array}\right)
\end{equation}
which are given by $\sigma\left(M'_{0}\vert N'_{0}\right)=\{\sqrt{1+\frac{1}{\sqrt{2}}},\frac{1}{\sqrt{2}},\frac{1}{\sqrt{2}},\sqrt{1-\frac{1}{\sqrt{2}}}\}$ and $\sigma\left(M_{1}\vert N_{1}\right)=\{\frac{1}{\sqrt{2}},\frac{1}{\sqrt{2}},\frac{1}{\sqrt{2}}\}$. We conclude that this map
\begin{equation}
\zeta(\pi_4(A))=0
\end{equation}
or, equivalently, $\pi_4(A)$ can be expressed in terms of a convex decomposition of other UCPT maps. The family of maps given by $\pi_4(\mathcal{A}_3)$ includes also UCTP-extremal maps. To see this, we consider a different example. In this case, consider the matrix $B\in\mathcal{A}_3$ given by
\begin{equation}
B=\frac{1}{\sqrt{7}}\left(\begin{array}{ccc}
0 & 1 & -1\\
0 & 1 & 1+\omega\\
1 & 1+\omega^2 & \omega^2
\end{array}\right).
\label{par3}
\end{equation}The rank four UCTP map $\pi_4(B)$ can be expressed in terms of the operator-sum representation as $\pi_4(B)(\rho)=\sum_{i=0}^{4}\frac{1}{\sqrt{7}}K_i\rho K^{\dagger}_i$ where
\begin{eqnarray}
\fl
\{K_0,K_1,K_2,K_3\}=
\nonumber
\\
\fl
\left\{
\left(\begin{array}{ccc}
0&0&0\\
0&\scriptstyle{\sqrt{3}}\ramuno&0\\
0&0&\scriptstyle{-\sqrt{3}}\ramuno
\end{array}\right),
\left(\begin{array}{ccc}
0&0&0 \\
\scriptstyle{1+\omega}&0&0 \\
0&0&0 \\
\end{array}\right),
\left(\begin{array}{ccc}
0&0&\scriptstyle{-1-\omega} \\
0&0&0 \\
\scriptstyle{1-\sqrt{3}\ramuno} & 0 & 0  \\
\end{array}\right),
\left(\begin{array}{ccc}
0& \scriptstyle{1+\sqrt{3}}\ramuno&0  \\
0&0&\scriptstyle{1} \\
0&0&0  \\
\end{array}\right)\right\}.
\end{eqnarray}

Again, we are interested in $\zeta(\pi_4(B))$. Then, we need to evaluate the singular values of the matrices
\begin{equation}
\left(M'_{0}\vert N'_{0}\right)=\frac{1}{7}
\left(\begin{array}{ccc|ccc}
0 & 3 & 3 & 0 & 3 & 3\\
3 & 0 & 0 & 3 & 0 & 0\\
4 & 0 & 3 & 3 & 0 & 4\\
0 & 4 & 1 & 4 & 1 & 0
\end{array}\right),
\end{equation}
\begin{equation}
\left(M'_{1}\vert N'_{1}\right)=\frac{1}{7}
\left(\begin{array}{cc|cc}
3-\sqrt{3}\ramuno & 1-\omega^2 & 0 & 1+\omega \\
1+\omega & 0 & 0 & 2\sqrt{3}\ramuno \\
0 & -2\sqrt{3}\ramuno & 0 & 0 \\
0 & 0 & 3 & 0 
\end{array}\right)
\end{equation}
\noindent
and
\begin{equation}
\left(M'_{2}\vert N'_{2}\right)=\frac{1}{7}
\left(\begin{array}{cc|cc}
0 & 3 & 2+2\omega & -1 \\
0 & 0 & 1-\omega^2 & 0 \\
3 & 0 & 0 & 3-\sqrt{3}\ramuno \\
0 & 0 & 0 & 1+\omega 
\end{array}\right).
\end{equation}

We obtain that their singular values are given by $\sigma\left(M'_{0}\vert N'_{0}\right)=\{1.28,0.749,0.600,0.498\}$, $\sigma\left(M'_{1}\vert N'_{1}\right)=\{0.779,0.508,0.429,0.230\}$, $\sigma\left(M'_{2}\vert N'_{2}\right)=\{0.682,0.682,0.155,0.155\}$. For this map, we obtain that
\begin{equation}
\zeta(\pi_4(B))=\frac{0.155}{1.28}=0.121.
\end{equation}
This proves that the family of maps given by $\pi_4(A_3)$ includes also UCPT-extremal maps being and represents a possible ansatz to investigate qutrit maps of rank four different from the provided in previous works \cite{Mendl2008,haagerup2021extreme}.

\section{Conclusions}

In this paper, we considered the characterisation of the convex structure of the set of unital quantum channels. To do this we proposed a novel framework for the study of such maps. Firstly, we considered a particular family of quantum maps admitting a Kraus decomposition in terms of linear combinations of Heisenberg-Weyl operators. Secondly, we introduced a parametrisation for this family of maps which we use to derive the equations required to satisfy the unital and trace-preserving conditions. Inspired by this parametrization, we proposed a generalization of the family of quantum maps, which allows considering other families of maps with different ranks. Finally, we proposed a measure of the distance between the generalised family of maps and the set of mixed-unitary maps.

As an application of this framework, we considered the particular case of UCPT maps over dimension three. Our framework is especially useful in this set-up as it allows us to consider different non-trivial extremal points of the set. To see that, we constructed examples of UCPT qutrit maps of ranks three and four. We showed that some of those examples can be associated with well-known maps in the literature such as the Heisenberg-Weyl channels or the ASWH qutrit map. We considered the relation between the examples of maps presented and the set of mixed-unitary maps in terms of the measure we defined. Our framework also provides a more comprehensive alternative to the construction of qutrit maps with rank four appearing in \cite{Mendl2008}.

In a conclusion, after considering a partial parametrisation of the set of unital and trace-preserving maps, we believe that a complete parametrisation of UCTP qutrit maps is attainable. We believe that our work could open the way for a better comprehension of the set. For future work, we will consider the features of the bipartite states associated with the families of maps presented. In particular, we will consider the application of the Choi-Jamio\l kowski isomorphism between such bipartite states and our maps.

\newpage
\printbibliography

\end{document}